\documentclass[letterpaper]{article} 
\usepackage{aaai2026}  
\usepackage{times}  
\usepackage{helvet}  
\usepackage{courier}  
\usepackage[hyphens]{url}  
\usepackage{graphicx} 
\usepackage{natbib}  
\usepackage{caption} 
\usepackage{algorithm}
\usepackage{algorithmic}

\usepackage{xcolor}

\usepackage{newfloat}
\usepackage{listings}
\DeclareCaptionStyle{ruled}{labelfont=normalfont,labelsep=colon,strut=off} 
\floatstyle{ruled}
\newfloat{listing}{tb}{lst}{}
\floatname{listing}{Listing}
\title{Rewiring the Majority: Structural Median Opinion Control\\ in the Friedkin–Johnsen Model}

\author {
    Elisabetta Biondi,
    Chiara Boldrini,
    Andrea Passarella,
    Marco Conti
}
\affiliations {
    IIT-CNR, Italy\\
    first.last@iit.cnr.it
}

\usepackage{amsmath}
\usepackage{amsthm}
\usepackage{amsfonts}
\usepackage{booktabs}

\DeclareMathOperator*{\argmin}{\mathrm{argmin}}
\DeclareMathOperator*{\argmax}{\mathrm{argmax}}

\begin{document}
\nocopyright

\newtheorem{problem}{Problem}
\newtheorem{theorem}{Theorem}
\newtheorem{corollary}{Corollary}

\newcommand*{\etaltwo}{\textit{et al}.\@ }

\maketitle

\begin{abstract}
Opinion control under the Friedkin-Johnsen (FJ) model has traditionally focused on modifying individual users, for example by changing their opinions or susceptibility to social influence. We study a complementary problem: can collective outcomes be steered by modifying the interaction network instead? We formulate structural median control as an edge-addition  problem aimed at moving the equilibrium median beyond a target threshold. We show that the problem is NP-hard even under highly restricted settings and propose both continuous optimization methods and computationally lighter opinion-aware heuristics for selecting edge additions. Experiments on synthetic and real-world networks reveal that intervention success depends jointly on network topology, the placement of favorable opinions, and user susceptibility. Structural interventions produce the largest shifts when opinion placement and network structure allow added edges to create useful influence pathways toward the target outcome. Finally, under a homogeneous reference setting, we compare structural and node-level interventions using normalized budgets and a degree-weighted cost model. We show that, although node interventions often appear stronger under normalized budgets, structural interventions become competitive and frequently more cost-effective once the higher cost of targeting  high-reach  users is taken into account.
\end{abstract}


\section{Introduction}

Online social platforms mediate a large fraction of social and political interaction. By shaping exposure to information and patterns of interaction, these platforms provide the substrate on which opinions are formed, reinforced, or contested. Understanding how opinions evolve over networks, and how collective outcomes depend on the underlying interaction structure, is therefore a central problem in computational social science, network science, and web mining~\cite{proskurnikov2017tutorial,proskurnikov2018tutorial,grabisch2020survey}.

A widely used model for studying such processes is the Friedkin–Johnsen (FJ) model~\cite{friedkin1990social,friedkin1997social}. Unlike classical consensus models such as DeGroot’s model~\cite{degroot1974reaching}, the FJ model incorporates two important psychological and sociological mechanisms: susceptibility to social influence and partial anchoring to agents' innate opinions; together they allow equilibrium states with persistent disagreement. This feature has made it a standard framework for studying polarization, disagreement, and opinion formation in social networks~\cite{ Matakos2017,biondi2023dynamics}. Building on this framework, a growing body of work has investigated opinion control problems, where the goal is to identify limited interventions that steer the equilibrium of the system toward a desired outcome. Existing approaches typically intervene at the node level, for instance by fixing selected agents’ expressed opinions~\cite{gionis2013opinion}, modifying innate opinions~\cite{Tu2023, sun2023opinion, gaitonde2020adversarial,chen2022adversarial}, or changing agents’ susceptibility to social influence~\cite{abebe2022opinion,chan2021hardness}.

Recent work has also considered objectives that go beyond aggregate opinion measures. 
In particular, \citet{ristache2025countering} introduced the problem of flipping the median equilibrium opinion under FJ dynamics. Unlike aggregate objectives such as the mean, the median directly captures which side of a threshold contains a majority of agents: shifting the average may leave most agents’ positions unchanged, whereas moving the median beyond a decision threshold implies changing the majority outcome. Moreover, because it depends on the ordering rather than the magnitude of opinions, the median is insensitive to a small number of extreme values, making it especially relevant in settings such as elections, referenda, and other collective decisions.  In the same work, the authors study node-level interventions on agents’ susceptibility parameters and show that the resulting median-optimization problem is computationally hard. Their work highlights the importance of median-based objectives, but leaves open a complementary question: can majority outcomes be influenced without directly modifying individual agents?

In this paper, we study this question through the lens of structural interventions, motivated by the observation that modern online platforms can modify interaction patterns more easily than users' beliefs. Through link recommendations, follow suggestions, and similar mechanisms, platforms can create new influence pathways without directly manipulating individuals~\cite{zhu2022nearly,racz2023towards}. Given a network, innate opinions, susceptibility parameters, and a limited budget, we ask whether a small number of edge additions can move the equilibrium median beyond a target threshold. Our key intuition is to create influence pathways from agents whose equilibrium opinions need to be shifted toward agents already aligned with the target outcome. We instantiate this idea through continuous relaxations that optimize smooth approximations of the median objective, together with computationally lighter opinion-aware heuristics that prioritize promising edges using the current equilibrium opinion landscape. 

Our work therefore bridges median-based opinion control and structural network interventions. This combination leads to a distinct optimization problem: the objective is non-smooth because it depends on the median, while the decision variables affect the FJ equilibrium nonlinearly through the influence matrix. Structural median control therefore requires new formulations as well as a dedicated empirical comparison with node-level manipulation.
Our study focuses on three questions: (RQ1) Under which combinations of topology, opinion placement, and susceptibility can structural interventions flip the median?  (RQ2) How effective are structural interventions on real-world networks with empirically derived opinion distributions? (RQ3) How do they compare to node-level interventions when accounting for normalized and cost-calibrated budgets? 


We evaluate homogeneous and heterogeneous susceptibility regimes on synthetic and real-world networks. Intervention feasibility depends on the alignment of network topology, opinions, and susceptibility, with heterogeneity either hindering or facilitating median control. Under homogeneous susceptibility, structural interventions are competitive with node manipulation and become increasingly cost-effective as targeting high-reach users grows more expensive. On real-world networks, selected edge additions reverse the equilibrium majority with budgets of at most 27\% of the original edge count; notably, on Polblogs, a structural method succeeds with approximately 11\% edge additions while no tested node-level method reaches the target.


Our contributions are fourfold. First, we introduce structural median control under FJ dynamics and show that its edge-addition decision problem remains NP-hard under restricted settings. Second, we develop continuous optimization methods and computationally lighter opinion-aware heuristics for selecting added edges. Third, through repeated controlled configurations and real-world opinion data, we characterize how intervention success depends jointly on opinion placement, topology, community structure, and susceptibility. Finally, we introduce a degree-weighted cost model for comparing node and structural interventions and identify the regimes in which edge additions become more cost-effective.

\section{Related Work}

Opinion-control research under Friedkin--Johnsen dynamics has primarily acted on node-level quantities. Prior work fixes selected expressed opinions~\cite{gionis2013opinion}, perturbs innate opinions~\cite{gaitonde2020adversarial,chen2022adversarial}, optimizes interventions under limited information~\cite{Tu2023}, modifies susceptibility~\cite{abebe2022opinion,chan2021hardness}, or considers opinion optimization in directed networks~\cite{sun2023opinion}. These studies show that limited interventions can substantially alter equilibrium outcomes, but they leave the interaction structure unchanged.

The closest work is \citet{ristache2025countering}, who formulate median flipping through node-level susceptibility interventions and develop continuous and combinatorial optimization methods. We adopt the same median-centric objective but intervene on the influence network: rather than modifying individual agents, we add edges that change who influences whom.

Our work also relates to structural design in opinion dynamics. Edge additions and network perturbations have been used to reduce polarization, disagreement, conflict, or distance from consensus~\cite{musco2018minimizing,zhu2021minimizing,zhu2022nearly,racz2023towards}. In contrast, we optimize a threshold-crossing median outcome. Our edge-addition formulation is motivated by link-recommendation systems that reshape interaction networks~\cite{gupta2013wtf}. We assume that selected links are adopted; incorporating probabilistic acceptance~\cite{yang2013maximizing} is left for future work.

\section{Modeling Framework} 
In this section, we provide a concise overview of the FJ model, which outlines the dynamics of opinion evolution. Following this, we present various formulations of the primary problem addressed in our work: investigating the effectiveness of edge interventions in steering the final opinions' median toward a desired direction.

\subsection{Friedkin-Johnsen Opinion Dynamics}\label{sec:setting}

Let $G=(V,E)$ be an unweighted directed social network with $n=|V|$ nodes representing the users and $E$ representing the edges. An edge $(i,j)\in E$ indicates that user $i$ follows user $j$ and is therefore influenced by $j$, for instance by being exposed to $j$'s posts. Thus, the edge orientation represents the \emph{following} relation, while information and influence flow in the opposite direction. We denote by $\mathcal{N}^{out}(i)$ the set of users followed by $i$. Let $A=(a_{ij})$ be the binary adjacency matrix of $G$, where $a_{ij}=1$ if $(i,j)\in E$ and $a_{ij}=0$ otherwise. We assume that each user assigns equal weight to the opinions of all users they follow. Thus, $w_{ij}$ quantifies the contribution of user $j$'s opinion to user $i$'s update, and the row-stochastic influence matrix $W=(w_{ij})$ is obtained by row-normalizing $A$, i.e., $
w_{ij}=\frac{a_{ij}}{\sum_{k\in V}a_{ik}}=\frac{a_{ij}}{d^{out}(i)}$, where $d^{out}(i)=|\mathcal{N}^{out}(i)|$. Thus, $w_{ij}=1/d^{out}(i)$ if user $i$ follows user $j$, and $w_{ij}=0$ otherwise. In the FJ model, each node $i\in V$ is associated with an innate opinion $s_i \in [0,1]$ and a susceptibility parameter $\lambda_i \in [0,1]$, which represents the extent to which user $i$ moves from its initial opinion under social influence. 
Opinions evolve in discrete time according to the following rule:
$$x_i(t+1)=(1-\lambda_i)s_i+\lambda_i \sum_{j\in \mathcal{N}^{out}(i)} w_{ij} x_j(t),$$
where $\mathcal{N}^{out}(i)$ indicates the out-neighbourhood of node $i$. If $\Lambda$ is a diagonal matrix with diagonal entries equal to $\lambda_1,\lambda_2,\dots$, then the matrix formulation of the update rule can be written as $x(t+1)=(I-\Lambda)s+\Lambda W x(t)$.
Under mild conditions, the dynamics converge to the unique equilibrium $z=(I-\Lambda W)^{-1}(I-\Lambda)s$.
This happens, for instance, if $\lambda_i < 1$ for all nodes, as the spectral radius of $\Lambda W$ satisfies $\rho(\Lambda W)<1$. Interested readers can refer to \cite{proskurnikov2017tutorial} for a more in-depth discussion on the convergence properties of FJ model.


\subsection{Structural interventions}
\label{sec:structural_intervention}

We consider structural interventions that modify the network topology by adding new directed edges. Let
$$
E_{\mathrm{lat}}=\{(i,j)\in V\times V:\; i\neq j,\ (i,j)\notin E\}
$$
denote the set of latent edges, namely all missing directed edges excluding self-loops. Then, the intervention activates a subset $F\subseteq E_{\mathrm{lat}}$, producing the modified graph $G_F=(V,E\cup F)$. Consistently with our unweighted setting, influence is distributed uniformly among all outgoing neighbors. The influence matrix $W(F)$ is therefore obtained by row-normalizing the binary adjacency matrix $A(F)=(a_{ij}(F))$ of $G_F$, where:
$$a_{ij}(F)=
\begin{cases}
1 & \text{if } (i,j)\in E\cup F,\\
0 & \text{otherwise}.
\end{cases}$$
Thus
$$w_{ij}(F)=\frac{a_{ij}(F)}{\sum_{k\in V}a_{ik}(F)}.$$

Although other structural interventions, such as edge rewiring or edge removal, could also be considered, we focus exclusively on edge addition. This setting naturally models recommendation systems that increase users' exposure by creating new interactions while preserving the existing network structure. Extending the formulation to weighted networks would additionally require specifying the influence assigned to newly added edges and how existing influence weights are redistributed. We leave these extensions for future work.

\subsection{Problem Formulation}
\label{sec:problems}

Our objective is to steer the equilibrium of the FJ model through structural interventions. Since majority outcomes are naturally captured by the median equilibrium opinion, we consider optimization objectives based on the median after adding a set of latent edges. We denote by
$$
M(z(F))=\mathrm{Median}(z(F)),
$$
the equilibrium median obtained after adding an edge set $F\subseteq E_{\mathrm{lat}}$ and recomputing the row-normalized influence matrix $W(F)$. We study two optimization formulations, presented in the increasing direction without loss of generality.  Indeed, since the influence matrix $W(F)$ is row-stochastic, we have $(I-\Lambda W(F))\mathbf{1}=(I-\Lambda)\mathbf{1}$. Therefore, replacing the innate opinions with $\tilde{s}=\mathbf{1}-s$ yields
$$
\tilde{z}(F)=(I-\Lambda W(F))^{-1}(I-\Lambda)(\mathbf{1}-s)
=\mathbf{1}-z(F),
$$
and hence $M(\tilde{z}(F))=1-M(z(F))$ for every added edge set $F$. Thus, minimizing the original median is equivalent to maximizing the transformed median. Similarly, moving the original median to or below $\theta$ is equivalent to moving the transformed median to or above $1-\theta$. The admissible edge sets and their cardinalities remain unchanged.

\begin{problem}[Structural Median Maximization]
\label{pr:maximize}
Given a budget $k\in\mathbb{N}$ of the number of possible interventions, find
$$
F^* \in \argmax_{F\subseteq E_{\mathrm{lat}},\, |F|\le k}
M(z(F)).
$$
\end{problem}

\begin{problem}[Structural Median Threshold Minimization]
\label{pr:threshold_min}
Given a threshold $\theta\in[0,1]$, find
$$
F^* \in \argmin_{F\subseteq E_{\mathrm{lat}}} |F|
$$
such that
$$
M(z(F)) \ge \theta .
$$
\end{problem}

When the initial median lies below $\theta$, the threshold-minimization problem corresponds to flipping the median across the target threshold. This formulation is particularly relevant in applications such as elections or referenda, where the objective is to secure a majority rather than maximize the margin. To establish computational hardness, we consider the following decision version.

\begin{problem}[Structural Median Threshold Decision]
\label{pr:threshold_decision}
Given a budget $k$ and a threshold $\theta$, decide whether there exists a set $F\subseteq E_{\mathrm{lat}}$ with $|F|\le k$ such that
$$
M(z(F)) \ge \theta .
$$
\end{problem}
\noindent The decision formulation is introduced solely for the complexity analysis, whereas the remainder of the paper focuses on the two optimization problems.
\section{Computational Hardness}
\label{sec:hardness}

We show that structural median control is computationally intractable in general.

\begin{theorem}
\label{thm:ea-threshold-decision-hard}
The \textsc{Structural Median Threshold Decision} problem is NP-hard, even when all missing non-self edges are admissible, the innate opinions are binary, all susceptibility parameters are homogeneous, and the threshold is fixed to a constant strictly between zero and one.
\end{theorem}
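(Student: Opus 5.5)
We reduce from \textsc{Set Cover}, or more conveniently from \textsc{Exact Cover by 3-Sets} (X3C): given a universe $U=\{u_1,\dots,u_{3q}\}$ and a family $\mathcal{C}=\{C_1,\dots,C_m\}$ of 3-element subsets, decide whether $q$ sets of $\mathcal{C}$ cover $U$. Fix a homogeneous susceptibility $\lambda\in(0,1)$, say $\lambda=1/2$, and binary innate opinions. The idea is to build a graph in which every added edge raises the equilibrium of at most one ``set gadget'' above some level, and that gadget in turn drags its three element nodes above the threshold $\theta$. The median crosses $\theta$ exactly when enough element nodes are lifted. Padding nodes fix the median position, so that success requires lifting all $3q$ element nodes.

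\textbf{Construction.} We introduce a source node $r$ with $s_r=1$ and no out-edges, so $z_r=(1-\lambda)s_r=1-\lambda$, or $z_r=1$ if we allow $\lambda_r$ to equal the common value with a self-consistent sink gadget. To keep homogeneity, we replace the single sink by a large clique $R$ of opinion-$1$ nodes, whose equilibrium is exactly $1$, since $z=\lambda z+(1-\lambda)$ gives $z=1$. Symmetrically, we add a large clique $Z$ of opinion-$0$ nodes with equilibrium $0$. For each set $C_j$ we create a node $c_j$ with $s=0$ following a fixed number $D$ of nodes in $Z$. For each element $u_i$ we create a node $e_i$ with $s=0$ following a few nodes in $Z$, and we connect $e_i\to c_j$ for each $C_j\ni u_i$.

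Adding an edge $c_j\to R$ raises $z_{c_j}$ to roughly $\lambda/(D+1)$, a quantity bounded away from $0$. Each element node then averages over its set-neighbours. We choose the weights (the number of $Z$-neighbours of $e_i$, possibly by duplicating the structure through intermediate copies) and $\theta$ so that $z_{e_i}\ge\theta$ iff at least one of its sets is activated. Since all quantities are explicit rationals, these comparisons can be checked directly. Because $R$ and $Z$ are cliques, edges between them or within them are either absent from $E_{\mathrm{lat}}$ or irrelevant.

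\textbf{Counting and median placement.} We add padding: enough nodes in $R$ (equilibrium $1\ge\theta$) and in $Z$ (equilibrium $0$) that the median is $\ge\theta$ iff at least $3q$ of the nodes outside $R$ achieve $z\ge\theta$. With budget $k=q$, an exact cover activates $q$ set nodes and lifts all $3q$ elements, so the median reaches $\theta$. For the converse we must show that no other use of $q$ edges does better. A direct edge $e_i\to R$ lifts only one element node. An edge $c_j\to R$ lifts at most three element nodes, plus $c_j$ itself, which we keep below $\theta$ by choosing $D$ large. An edge from a $Z$-node to $R$ barely changes $Z$ because $Z$ is a large clique, and we must verify that this perturbation stays below a margin. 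Hence $q$ edges lift at most $3q$ element nodes, with equality only for an exact cover.

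\textbf{Main obstacle.} The hard step is the converse: bounding the effect of arbitrary edge additions, including edges that create paths of length greater than one, edges into non-$R$ nodes, and edges that dilute $Z$-influence. The plan is to prove a uniform bound: each added edge $(a,b)$ changes $z$ by at most $\lambda/d^{out}(a)$ at $a$, decaying by a factor $\lambda/d$ per hop. We make all out-degrees large, using cliques of size polynomial in $m$, so that indirect effects stay below a fixed gap $\varepsilon$ around $\theta$. Setting this gap, together with keeping $\theta$ a fixed constant independent of the instance, is where the parameter tuning must be done carefully. The reduction is polynomial because all gadget sizes are polynomial in $m+q$.
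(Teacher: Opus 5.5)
\textbf{There is a genuine gap, and it is in the gadget design itself, not only in the converse.}

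You want an activated set node $c_j$ to stay below $\theta$, so that it is not counted, while the element nodes following it rise to $\theta$. In the FJ model this cannot happen. Since $s_{e_i}=0$, if no edge is added at $e_i$ then $z_{e_i}=\lambda\sum_{\ell}w_{e_i\ell}z_\ell\le\lambda\max_{C_j\ni u_i}z_{c_j}$, because its $Z$-neighbours sit at $0$. So if every $z_{c_j}<\theta$, every $z_{e_i}<\theta$ as well. Influence strictly contracts at each hop through a zero-opinion node, and routing through ``intermediate copies'' only contracts it further. The equivalence ``$z_{e_i}\ge\theta$ iff some set containing $u_i$ is activated'' therefore fails outright.

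The same contraction defeats your strategy for the converse. In your construction, a node of out-degree $D$ that gains a single edge into $R$ reaches only $\lambda/(D+1)$. If you make gadget out-degrees polynomially large so that indirect effects fall below a gap $\varepsilon$, no single added edge can lift anything above a fixed constant $\theta$. The threshold would then have to shrink with the instance, contradicting the statement. Large out-degree is harmless for the cliques $R$ and $Z$, which you never want to move. But the nodes that one edge must lift need constant out-degree, so their indirect perturbations are not small and cannot be controlled by a uniform decay bound. The converse has to be an exact count over all possible edge placements.

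The paper's proof, a reduction from \textsc{Vertex Cover}, avoids both problems as follows.
\begin{itemize}
\item Each vertex is a two-user cycle $\{v_j,b_j\}$ of out-degree one. One added edge from the gadget to a cycle of $P$ innate-one anchors lifts \emph{both} gadget users.
\item Each edge of $H$ is a target following $v_i$ and $v_j$. The parameters $\lambda=1/2$ and $\theta=1/28$ are chosen so that gadget users and every target with an activated endpoint exceed $\theta$.
\item Gadget users are therefore counted. The quota is $P+2k+r$ users above $\theta$ among $N=2(P+2k+r)-1$, and zero-opinion padding pairs behave like isolated vertex gadgets.
\item The converse is a counting claim: every activated gadget, padding pair, or target lifted without an activated endpoint consumes a distinct added outgoing edge. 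This gives at most $P+2k+r-1$ users above $\theta$ when no cover of size $k$ exists.
\end{itemize}

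Your X3C route can be repaired along the same lines.
\begin{itemize}
\item Use a bounded-occurrence variant of X3C (each element in at most three sets, still NP-complete), so that $c_j$ and $e_i$ can have constant out-degree.
\item Count $c_j$ together with its three elements, so the quota becomes $4q$ lifted non-anchor nodes.
\item Verify case by case that every other kind of added edge lifts at most one node: edges from an element, between set nodes, into an activated $c_j$, or from $Z$.
\end{itemize}
Then $q$ edges lift at most $q+3q=4q$ such nodes, with equality iff they activate $q$ disjoint sets covering $U$.
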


\begin{proof}[Proof sketch]
We reduce from \textsc{Vertex Cover}. Given an undirected graph $H=(Q,L)$ and an integer $k$, the problem asks whether there exists a set $C\subseteq Q$, with $|C|\leq k$, containing at least one endpoint of every edge in $L$. At a high level, the reduction represents each vertex of $H$ by a two-user gadget with zero innate opinions. Selecting a vertex corresponds to adding one edge from its gadget to a fixed positive anchor, causing both users in the gadget to move above the threshold. Each edge of $H$ is represented by a target user influenced by the representatives of its endpoints. Activating either endpoint gadget therefore also moves the corresponding target above the threshold without additional edge additions, mirroring how selecting a vertex covers all its incident edges.

More precisely, for each $q_j\in Q$, we construct a directed two-user cycle $\{v_j,b_j\}$. For each edge $\ell=\{q_i,q_j\}\in L$, we introduce a target $u_\ell$ influenced by $v_i$ and $v_j$. We also introduce a cycle of $P$ innate-one users and a suitable number of zero-opinion padding pairs. The padding is chosen so that, letting $r=|L|$, the network contains $N=2(P+2k+r)-1$ users. Susceptibility parameters equal $1/2$, all missing non-self edges are admissible, and the threshold is fixed to $1/28$.

If $H$ admits a vertex cover of size at most $k$, we add one edge from each selected vertex gadget to a positive anchor. These additions make the $2k$ users in the selected gadgets and all $r$ targets exceed the threshold. Together with the $P$ positive anchors, this yields $P+2k+r=(N+1)/2$ users above the threshold, and hence moves the median above it.

Conversely, every activated vertex gadget, padding pair, or target not activated through one of its endpoint gadgets requires a distinct added outgoing edge. If no vertex cover of size at most $k$ exists, the resulting counting argument bounds the number of positive users under any intervention of size at most $k$ by $P+2k+r-1$, one below the number required to cross the median threshold.

The construction is polynomial and a successful intervention exists if and only if $H$ admits a vertex cover of size at most $k$. The complete construction and counting argument are provided in the SI.
\end{proof}

The result holds under highly simplified conditions: binary opinions, homogeneous susceptibilities, and a fixed threshold. Hence, the hardness is not due to complex opinion spaces or heterogeneous agent parameters, but to the combinatorial nature of selecting structural interventions.

\begin{corollary}
\label{cor:ea-optimization-hard}
The Edge-Addition Median Maximization Problem (Problem~\ref{pr:maximize})  and the Edge-Addition Median Threshold Minimization Problem (Problem~\ref{pr:threshold_min}) are NP-hard.
\end{corollary}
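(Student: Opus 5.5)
The corollary follows by reducing the \textsc{Structural Median Threshold Decision} problem (Problem~\ref{pr:threshold_decision}), shown NP-hard in Theorem~\ref{thm:ea-threshold-decision-hard}, to each optimization problem. The reductions are polynomial-time Turing reductions that call an exact solver once, followed by a single evaluation of the equilibrium. Throughout, I rely on one fact: for any candidate edge set $F$, the value $M(z(F))$ can be computed in polynomial time. Form $W(F)$ by row-normalization, solve the linear system $(I-\Lambda W(F))z=(I-\Lambda)s$, which is uniquely solvable since $\lambda_i=1/2<1$ in the hard instances, and sort the entries of $z$. All these quantities are rational with polynomially bounded bit size, so the comparison with $\theta=1/28$ is exact.

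\emph{Problem~\ref{pr:maximize}.} Take an instance $(G,s,\Lambda,k,\theta)$ of the decision problem and call an exact solver for the maximization problem with budget $k$, obtaining $F^*$. Compute $M(z(F^*))$ and answer ``yes'' if and only if $M(z(F^*))\ge\theta$. This is correct because some admissible $F$ with $|F|\le k$ has $M(z(F))\ge\theta$ exactly when the maximum over that feasible set is at least $\theta$, and $F^*$ attains that maximum.

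\emph{Problem~\ref{pr:threshold_min}.} Call an exact solver with threshold $\theta$ and answer ``yes'' if and only if it returns a feasible $F^*$ with $|F^*|\le k$. If the minimization problem is infeasible, answer ``no''; the solver is assumed to report infeasibility. Correctness is again immediate: a feasible set of size at most $k$ exists if and only if the minimum feasible cardinality is at most $k$.

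The only point needing care is the treatment of infeasibility in Problem~\ref{pr:threshold_min}. I would handle it by noting that in the instances produced by Theorem~\ref{thm:ea-threshold-decision-hard} feasibility always holds, so the solver always returns a set. Adding edges from every zero-opinion user to a positive anchor, using the admissible edges of $E_{\mathrm{lat}}$, pushes every user above $1/28$. Alternatively, one can include infeasibility as a permitted output of the solver. Since the restrictions in Theorem~\ref{thm:ea-threshold-decision-hard} carry over to these reductions, both optimization problems are NP-hard even under binary opinions, homogeneous susceptibility and a constant threshold.
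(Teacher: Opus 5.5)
Your proposal is correct and follows the same route as the paper, which only notes that Problem~\ref{pr:threshold_decision} is the threshold decision version of the optimization problems, so an exact solver for either problem, followed by a polynomial-time evaluation of $M(z(F))$, decides it. Your treatment of exact rational evaluation of the equilibrium and of infeasibility in Problem~\ref{pr:threshold_min} fills in details the paper leaves implicit.
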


This follows immediately because the flipping problem is the threshold decision version of median maximization.

\vspace{-5pt}
\section{Structural Optimization Methods}
\label{sec:methods}

The NP-hardness result makes exact optimization impractical at scale. However, approximation remains challenging: equilibrium opinions depend nonlinearly on the influence matrix, the median is non-smooth, and the space of edge subsets is combinatorial. We propose two complementary families of structural intervention methods. The first family uses a continuous relaxation of edge additions: latent edges are assigned continuous activation variables, and smooth surrogate objectives are optimized over the resulting influence matrix. The second family consists of discrete opinion-aware heuristics that operate directly on candidate edges and exploit the current equilibrium opinion landscape. The continuous methods provide a principled optimization framework, whereas the heuristics are designed to be efficient, interpretable, and closely aligned with the median-flipping objective. 
Figure~\ref{fig:strategies} schematically illustrates the
selection principles underlying the four policies.
\begin{figure}[h]
\centering
\includegraphics[width=0.8\columnwidth]{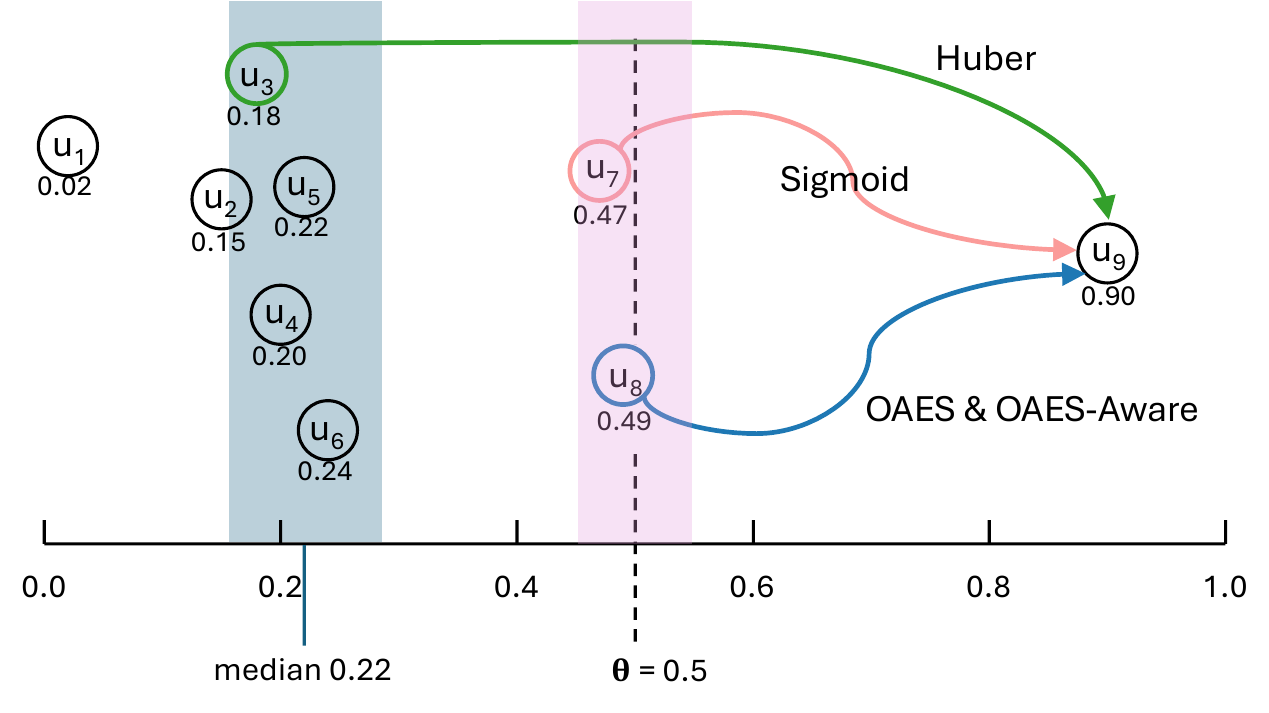}
\vspace{-1pt}
\caption{Schematic illustration of the four edge-selection policies. Nodes are positioned by their equilibrium opinions; shaded bands highlight the central opinion region and the neighborhood of $\theta$. Colored arrows show illustrative, alternative edge additions from receiver to source (\emph{following} relation), not computed selections on a specific network.}
\label{fig:strategies}
\vspace{-10pt}
\end{figure}

\subsection{Continuous Optimization}\label{sec:methods_continuous}

We begin by formulating edge addition as a continuous optimization problem. For each latent edge $(i,j)\in E_{\mathrm{lat}}$, we introduce a continuous influence-allocation variable $y_{ij}\in[0,1]$. Unlike node-level interventions, where each decision variable modifies a single agent parameter, an edge-allocation variable changes an entire row of the influence matrix. Since influence weights are obtained by row-normalizing the adjacency matrix, activating a new outgoing edge from node $i$ not only creates a new influence channel from $j$ to $i$, but also redistributes the influence mass previously assigned to $i$'s existing neighbors.

Let
$$
L_i=\{j:(i,j)\in E_{\mathrm{lat}}\}
$$
be the set of latent outgoing neighbors of node $i$. The variables $y_{ij}$ reallocate part of the outgoing influence mass of $i$ toward latent edges. To preserve row-stochasticity, we impose
$$
\sum_{j\in L_i} y_{ij}\le 1 \qquad \forall i\in V .
$$
Given the original influence matrix $W^{(0)}$, this yields the following mass-reallocation relaxation:
\begin{equation*}
 W_{i\ell}(y)=
\begin{cases}
\left(1-\sum_{j\in L_i}y_{ij}\right)W^{(0)}_{i\ell}, & (i,\ell)\in E, \\[3pt]
y_{i\ell}, & (i,\ell)\in E_{\mathrm{lat}},\\[3pt]
0, & \text{otherwise}.
\end{cases}   
\end{equation*}
The corresponding equilibrium is
$$
z(y)=(I-\Lambda W(y))^{-1}(I-\Lambda)s .
$$

We consider two policies, named after their respective surrogate
objectives. The  rationale behind the \emph{Huber} policy is to increase the median rather than the average opinion, which can be strongly affected by a few extreme values. The policy uses the Huber loss function, whose linear tails limit the influence of extreme opinions and whose quadratic behavior for small deviations
smooths the loss near the center. The resulting approximation
of the median is defined as: 
$$
\hat m_c(y)
=
\arg\min_m
\sum_{i=1}^{n} H_c(z_i(y)-m),
$$
where
$$
H_c(r)=
\begin{cases}
\frac{1}{2}r^2, & |r|\le c,\\[3pt]
c\left(|r|-\frac{1}{2}c\right), & |r|>c.
\end{cases}
$$
The policy maximizes $\hat m_c(y)$ as a surrogate for the
equilibrium median. The \emph{Sigmoid} policy instead exploits the fact that reaching a majority depends on how many nodes cross the threshold, rather than how far their opinions move beyond it. The sigmoid function is most sensitive to opinion changes near $\theta$ and saturates far from the threshold. 
Its objective is defined as:
$$
f_{\mathrm{sig}}(y)
=
\sum_{i=1}^{n}
\sigma\left(\tau(z_i(y)-\theta)\right),
\qquad
\sigma(t)=\frac{1}{1+e^{-t}} .
$$
It smoothly approximates the number of nodes whose equilibrium opinion exceeds $\theta$.

Both objectives are optimized subject to the same feasibility constraints:
$$
\sum_{(i,j)\in E_{\mathrm{lat}}} y_{ij}\le k,\qquad
\sum_{j\in L_i} y_{ij}\le 1\ \forall i,\qquad
y_{ij}\in[0,1].
$$
The relaxed problems are solved by projected gradient ascent. Although the surrogate objectives are inspired by median-control methods for node-level interventions~\cite{ristache2025countering}, the resulting optimization problem is different. In node-level control, a decision variable typically changes an agent parameter while the influence matrix remains fixed. Here, each variable enters the equilibrium through $W(y)$, changes the local normalization of an entire row, and affects the final objective through both the newly allocated edge mass and the induced reweighting of pre-existing neighbors. Moreover, this dependence is nonlinear, since changes in $W(y)$ are propagated through the inverse operator $(I-\Lambda W(y))^{-1}$ defining the FJ equilibrium.

The optimized values $y_{ij}$ are used only to rank candidate edges, and the top-$k$ edges are selected as the discrete intervention $F$. After rounding, the fractional matrix $W(y)$ is discarded: the selected edge set $F$ is added to the original adjacency matrix, and the final influence matrix $W(F)$ is obtained by row-normalizing the resulting discrete adjacency matrix, using the same rule defined for discrete structural interventions.

\subsection{Opinion-Aware Heuristics}\label{sec:methods_heuristics}

Continuous optimization provides a principled relaxation of structural median control, but it also introduces a relaxation-rounding gap. The original intervention is discrete, whereas the relaxed solution assigns fractional activation values to latent edges. Moreover, the optimized objective is a smooth surrogate of the median or of the threshold-crossing condition. As a result, a solution that is effective in the relaxed problem may not always translate into an equally effective discrete edge set, especially for small budgets or when the median changes only after specific edge combinations are selected.

To complement the continuous methods, we introduce efficient discrete heuristics that rank candidate edges using the current equilibrium opinions. Let $z$ denote the equilibrium opinion vector of the original network, and let
\begin{equation*}
b_i=(Wz)_i
\end{equation*}
be the average opinion currently observed by node $i$ through its outgoing neighborhood. For a candidate edge $(i,j)$, we define the exposure improvement
\begin{equation}
\Delta_{ij}
=
\max\{0,z_j-b_i\}.
\end{equation}
This quantity measures how much more favorable node $j$'s opinion is compared with the current neighborhood opinion of node $i$.

\subsubsection{Opinion-Aware Edge Selection (OAES)}
\label{sec:oaes}

The first heuristic prioritizes candidate edges that both provide a positive exposure improvement and affect receivers close to the target threshold. The rationale is that increasing the opinion of a node that is already far from the decision boundary may have little impact on the median, whereas shifting a near-threshold node can be decisive.

We define the receiver relevance as
\begin{equation*}
r_i
=
\exp\left(-\beta |z_i-\theta|\right),
\end{equation*}
where $\beta>0$ controls how strongly the score concentrates around the threshold. The resulting score is
\begin{equation}
\mathrm{OAES}(i,j)
=
r_i \cdot \Delta_{ij}.
\end{equation}
Thus, OAES favors edges that expose near-threshold receivers to more favorable opinion sources than those currently present in their neighborhood.

\subsubsection{Threshold-Aware Opinion-Aware Edge Selection
(OAES-Aware)}
\label{sec:oaes_aware}

The second heuristic, which we call
\emph{Threshold-Aware Opinion-Aware Edge Selection (OAES-Aware)},
further accounts for the opinion of the influencing node. Connecting a near-threshold receiver to a source whose opinion is only marginally favorable may be less effective than connecting it to a strongly favorable source. We therefore introduce a source-awareness factor
\begin{equation}
q_j
=
\frac{1}
{1+\exp\left[-\kappa(z_j-\theta)\right]},
\end{equation}
where $\kappa>0$ controls the sharpness of the transition around the threshold. The value $q_j$ is close to one for nodes well above the threshold and close to zero for nodes below it. The final score is
\begin{equation}
\operatorname{OAES\text{-}Aware}(i,j)
=
r_i \cdot q_j \cdot \Delta_{ij}.
\end{equation}
This formulation simultaneously favors receivers close to the decision boundary, sources that strongly support the target outcome, and edges that provide a substantial exposure improvement.


\subsection{Computational Complexity and Scalability}

Table~\ref{tab:complexity} reports the asymptotic complexity of the proposed methods under the dense linear-algebra implementation used in our experiments. OAES and OAES-Aware require one equilibrium computation, whereas Huber and Sigmoid require two linear-system solves per iteration. In practice, this means OAES and OAES-Aware incur roughly $\frac{1}{2T}$ of the dominant $O(n^3)$ cost required by the continuous methods, since their total cost is essentially a single equilibrium computation rather than 2T of them. Complete derivations, memory requirements, and scalability considerations are provided in the SI.

\begin{table}[h]
\centering
\caption{Computational complexity of the proposed structural intervention methods. Here $n=|V|$, $L=|E_{\mathrm{lat}}|=O(n^2)$, and $T$ is the number of projected-gradient iterations.}
\vspace{-5pt}
\label{tab:complexity}
\small
\begin{tabular}{lcc}
\toprule
Method & Linear solves (FJ) & Total time complexity \\
\midrule
OAES        & $1$  & $O(n^3+L\log L)$ \\
OAES-Aware  & $1$  & $O(n^3+L\log L)$ \\
Huber       & $2T$ & $O\!\left(T(n^3+L)+L\log L\right)$ \\
Sigmoid     & $2T$ & $O\!\left(T(n^3+L)+L\log L\right)$ \\
\bottomrule
\end{tabular}
\vspace{-10pt}
\end{table}

\section{Experimental Setup}

We evaluate the proposed structural intervention framework using controlled opinion configurations and real-world datasets with observed opinions. Controlled opinion configurations allow us to isolate the effect of opinion placement and network topology, whereas datasets with observed opinions allow us to assess whether the same mechanisms arise in real-world settings. 

Throughout the evaluation, we use $\theta=0.5$ as the decision threshold. The intervention direction is determined by the median of the innate opinions: the objective is to move the equilibrium median to the opposite side of the threshold. We refer to opinions lying on the target side of $\theta$ as \emph{favorable}; the term is purely operational and carries no
normative judgment.



\subsection{Datasets  and Opinion Configurations}
\label{sec:datasets}

We evaluate our methods on three networks with controlled opinion configurations and three real-world datasets with observed opinion scores. Table~\ref{tab:datasets} summarizes the datasets and opinion configurations used in the evaluation.

The controlled setting includes Karate Club~\cite{zachary1977information}, a Facebook friendship network~\cite{leskovec2012learning}, and a two-community Stochastic Block Model (SBM). Karate Club and Facebook provide fixed network topologies on which we generate synthetic innate opinions, while the SBM allows us to explicitly study the interaction between community structure and opinion placement. The SBM has $100$ nodes, with within- and between-community connection probabilities $p_{in}=0.3$ and $p_{out}=0.01$. For the controlled experiments, we use opinion configurations that vary both the strength and the placement of favorable opinions. In the \emph{binary} setting, favorable nodes have opinion $1$ and all other nodes have opinion $0$. In the \emph{bimodal} setting, favorable opinions are sampled uniformly from $[0.75,1]$, while unfavorable opinions are sampled uniformly from $[0,0.35]$. In \emph{central} configurations, favorable opinions are assigned to the top $20\%$ most PageRank central nodes; in \emph{random} configurations, favorable nodes are selected uniformly at random. For the SBM, we compare \emph{community-aligned} and \emph{random} placements while keeping the favorable fraction fixed at $20\%$. In the community-aligned configurations, favorable nodes are sampled within one planted community; in the random configurations, the same number of favorable nodes is sampled from the entire network.

The real-world evaluation uses three datasets with observed ideological or political scores. Reddit-Chitra and Twitter-Chitra~\cite{chitra2020filter} contain interaction networks with ideological scores inferred from user behavior, while Polblogs~\cite{adamic2005political} contains hyperlinks among political blogs labeled by political orientation. In all three cases, the observed scores are used as innate opinions in the FJ model.

\begin{table}[t]
\centering
\caption{Datasets used in the experimental evaluation.}
\vspace{-8pt}
\label{tab:datasets}
\resizebox{\columnwidth}{!}{
\begin{tabular}{lrrl}
\toprule
Dataset & $|V|$ & $|E|$ & Opinion configurations \\
\midrule
Karate Club \cite{zachary1977information} & 34 & 156 & Synthetic: central/random, binary/bimodal \\
Facebook~\cite{leskovec2012learning} & 1519 & 8078 & Synthetic: central/random, binary/bimodal \\
SBM & 100 & 1466 & Synthetic: random/community-aligned, binary/bimodal \\
\midrule
Polblogs~\cite{adamic2005political} & 1224 & 19184 & Observed political leaning \\
Reddit-Chitra~\cite{chitra2020filter} & 553 & 17938 & Observed ideological score \\
Twitter-Chitra~\cite{chitra2020filter} & 548 & 7276 & Observed ideological score \\
\bottomrule
\end{tabular}
}
\vspace{-15pt}
\end{table}

\subsection{Common Settings and Comparison Methods}
\label{sec:comparison_methods}

Unless otherwise stated, we consider both homogeneous and heterogeneous susceptibility regimes, with $\lambda_i$ measuring the extent to which node $i$ responds to social influence rather than remaining anchored to its innate opinion.  Under homogeneous susceptibility, we set $\lambda_i=0.8$ for every node.  Under heterogeneous susceptibility, we construct two policies based on PageRank. In the \emph{centrality-aligned} policy, PageRank scores are linearly rescaled to $[0.001,0.98]$, assigning higher susceptibility to more central nodes. In the \emph{centrality-complemented} policy, this association is reversed, so that more central nodes are less susceptible. For SBM, we additionally consider the analogous policies based on betweenness centrality.

For experiments targeting a decrease in the equilibrium median, we apply the opinion transformation described in the Problem Formulation section. The selected interventions are subsequently evaluated in the original opinion space.

\paragraph{Structural baselines.}
We compare against Random, Degree, PageRank, and Opinion-PR~\cite{racz2023towards,zhu2021minimizing,ristache2025countering,sun2023opinion,Tu2023,zhu2022nearly}. Candidate edge $(i,j)$ is scored through its destination $j$: Degree uses $d_j^{\mathrm{in}}$, PageRank uses $PR(j)$, and Opinion-PR uses $PR(j)s_j$. Random selects candidate edges uniformly. Deterministic policies select the highest-ranked feasible edges. 


On Karate Club, we additionally evaluate a Greedy baseline based on exact marginal evaluations. At each iteration, it selects the candidate edge producing the largest one-step improvement toward the target. Since this requires recomputing the equilibrium for every remaining candidate edge, the method is prohibitively expensive on the larger networks and is therefore used only as a small-scale reference. Its computational cost is detailed in the SI.

\paragraph{Node-level counterparts.}
We adopt the node-level intervention model and the Huber and Sigmoid optimization methods of~\citet{ristache2025countering}. Following their intervention rule, all node-selection policies act on selected nodes by making those with favorable innate opinions fully stubborn and those with unfavorable opinions maximally susceptible. We retain the Random and Degree baselines considered in their evaluation and additionally include PageRank and Opinion-PR to mirror the corresponding structural baselines.
\paragraph{Optimization and Reporting Settings.}  
Based on the sensitivity analysis reported in the SI, we use $c=0.05$, $\tau=20$, $\beta=10$, and $\kappa=25$ across all datasets. Across the tested ranges, the qualitative intervention curves remain stable; we therefore treat these values as reference settings rather than universally optimal parameters. We run the continuous methods for 40 iterations on Karate Club and 15 iterations on the larger networks. For controlled scenarios involving sampled opinion values or random node placement, we generate ten independent configurations and report the median and interquartile range across them. Random is averaged over 100 runs within each configuration before this aggregation. For fixed inputs, Random is reported through its mean and $95\%$ confidence interval over 100 runs, while all other methods are deterministic.





\subsection{Evaluation Protocols}
\label{sec:evaluation_protocols}

Since node and edge interventions modify different objects, their budgets are not directly comparable. We therefore evaluate them under two complementary protocols: a normalized intervention-fraction comparison and a degree-weighted cost sensitivity analysis.
\vspace{-2pt}
\vspace{-3pt}
\paragraph{Normalized intervention fraction.}
We first compare the two approaches using their natural intervention scale. Structural interventions are evaluated as a function of the fraction of added edges, $k_{\mathrm{edge}}/|E|$, whereas node-level interventions are evaluated using the fraction of manipulated nodes, $k_{\mathrm{node}}/|V|$. This comparison measures how the equilibrium median changes as increasingly larger portions of the corresponding intervention space are modified.

\paragraph{Degree-weighted cost sensitivity analysis.}
The normalized comparison above abstracts away from possible differences in the implementation costs of structural and node-level interventions. We therefore complement it with a stylized cost sensitivity analysis. Unit-cost edge additions correspond to the cardinality budgets commonly adopted in network-intervention and link-recommendation problems~\cite{racz2023towards,zhu2021minimizing}. For node interventions, budgeted influence-maximization models commonly assume additive and heterogeneous targeting costs
~\cite{nguyen2013budgeted}. Degree-proportional costs have also been used as a transparent proxy for the greater expense of targeting high-reach nodes~\cite{erd2021generalized}.

In our approach to edge orientation, an edge $(i,j)$ signifies that $i$ follows $j$. This means that $\deg_{\mathrm{in}}(u)$ represents the count of users who directly follow node $u$, serving as a tangible measure of its reach. Accordingly, we define the intervention cost for edge and node as follows:
\begin{equation}
C_{\mathrm{edge}}(F)=|F|,
\qquad
C_{\mathrm{node}}(S)
=
\rho\sum_{u\in S}\deg_{\mathrm{in}}(u),
\label{eq:cost_model}
\vspace{-3pt}
\end{equation}
where $\rho\geq 1$ scales the reach-weighted cost of node manipulation relative to the unit cost of adding an edge. We treat Eq.~\eqref{eq:cost_model} as a stylized sensitivity model rather than an empirically calibrated monetary cost. We therefore examine how the comparison changes as $\rho$ increases, with $\rho=1$ representing the least penalizing setting for node interventions. The main analysis reports the representative values $\rho\in\{1,2,5,10\}$, while the extended analysis up to $\rho=50$ is provided in the SI.

\section{Performance evaluation}
\label{sec:evaluation}
We evaluate our proposed structural intervention framework through a series of experiments designed to answer three research questions raised in the Introduction.

\subsection{RQ1: When can structural interventions flip the median?}
\label{sec:rq1_synthetic}

We first study structural interventions in controlled synthetic settings, which allow us to isolate two elements that are intertwined in real data: the placement of favorable opinions and the presence of community structure. In all synthetic settings, the median innate opinion lies below the target threshold $\theta=0.5$. Our objective is therefore to move the equilibrium median above this threshold, and we refer to opinions above $\theta$ as \emph{favorable}. 

\subsubsection{RQ1a: How effective are structural interventions in
networks without explicit community structure?}

Figure~\ref{fig:synthetic_noncommunity} reports the final equilibrium median obtained by structural interventions on Karate Club and Facebook under four opinion initialization schemes and homogeneous susceptibility. Curves and ribbons summarize, respectively, its median and interquartile range across ten independent opinion realizations; the central binary configuration is deterministic and evaluated once.

Random selects edges uniformly, whereas Degree and PageRank rely exclusively on network structure. Opinion-PR additionally accounts for opinion information. OAES prioritizes receivers close to the target threshold, while OAES-Aware also favors sources whose opinions are aligned with the target. Huber and Sigmoid optimize continuous surrogate objectives. On Karate Club, we additionally report a Greedy baseline based on exact marginal evaluations; its computational cost prevents its application to the larger networks.

The results show that intervention success depends strongly on the alignment between initial opinions and network topology. In the central configurations, where favorable opinions are assigned to influential nodes, several methods cross the target on Karate Club. On Facebook, 7 of the 10 central bimodal configurations already satisfy the target at zero budget; in the remaining configurations, several methods cross the threshold as the budget increases. The deterministic central binary configuration also responds strongly to intervention: OAES-Aware crosses the threshold at a small budget and several other methods succeed at larger budgets.

Random opinion placement produces a more difficult and heterogeneous intervention landscape. On Karate Club, both random configurations remain below the target, although Greedy produces the largest shifts. On Facebook, the median curves also remain below the target, but outcomes vary considerably across realizations: 3 of 10 configurations in both random settings already satisfy the target at zero budget, while others remain far below it even at large budgets. The broad interquartile ranges reflect differences in whether randomly placed favorable opinions fall on structurally influential users. Despite this variability, OAES-Aware and Opinion-PR generally provide the strongest outcomes among configurations requiring intervention. The complete heterogeneous-susceptibility results in the SI confirm that the effectiveness and relative ranking of opinion-aware methods depend on susceptibility and opinion placement.

Overall, added edges are most effective when they create influence pathways from favorable sources that are well positioned to affect pivotal nodes near the threshold. Increasing the budget often improves the equilibrium median, but does not guarantee a flip when favorable opinions are poorly aligned with the influence structure.



\begin{figure*}[h]
\centering
\includegraphics[width=0.9\textwidth]{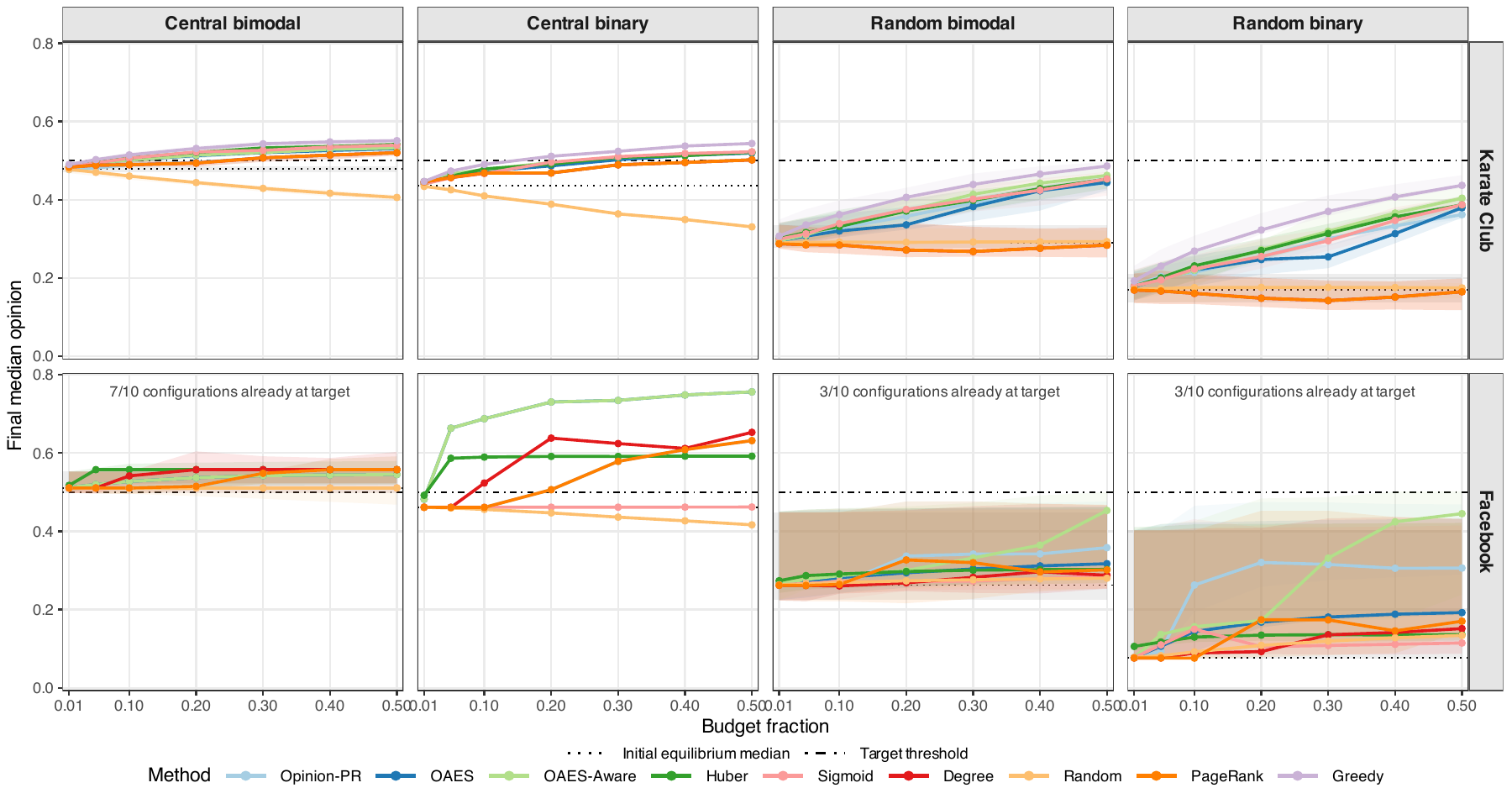}
\vspace{-8pt}
\caption{Final equilibrium median under structural interventions on Karate Club and Facebook with homogeneous susceptibility. Lines and shading show the median and interquartile range over ten opinion configurations; Random is averaged over 100 runs per configuration, while Central binary is deterministic. Dot-dashed lines mark the target; dotted lines and grey shading show the initial equilibrium. Budget is $k_{\mathrm{edge}}/|E|$.}
\label{fig:synthetic_noncommunity}
\vspace{-10pt}
\end{figure*}

\subsubsection{RQ1b: How does community structure affect structural interventions?}

We next investigate the role of community structure using the SBM network. To isolate opinion placement from opinion prevalence, all four configurations assign favorable innate opinions to exactly $20\%$ of the nodes. In the \textit{community} configurations, these nodes are concentrated within one planted community; in the \textit{random} configurations, the same number of favorable nodes is distributed across the network. Using the same median-and-interquartile-range reporting protocol, Figure~\ref{fig:synthetic_sbm} compares community-concentrated and randomly distributed favorable opinions on the SBM network under homogeneous susceptibility.

Holding the favorable share fixed leads to a more nuanced picture than the previous design. None of the configurations satisfies the target at zero budget. The \textit{community bimodal} configuration is the most amenable to intervention: OAES-Aware reaches the target in the median curve at the maximum tested budget, while OAES approaches it closely. The \textit{community binary} configuration remains below the target, although Sigmoid and the opinion-aware policies produce substantial shifts. Neither random configuration reaches the threshold within the tested budget range but, across configurations, methods incorporating opinion information consistently produce larger shifts than purely structural baselines.

The comparison reveals two effects of community concentration. Despite having the same fraction and values of favorable innate opinions, the community configurations start from lower equilibrium medians than their random counterparts, as favorable influence remains largely confined within one community. Once edges are added, however, the community configurations generally exhibit larger shifts relative to their initial equilibria. This pattern is consistent with selected inter-community edges exposing nodes outside the favorable community to a coherent region of favorable opinions, thereby creating greater intervention leverage than when favorable sources are dispersed across the network. Community structure can therefore amplify the effect of edge additions while simultaneously producing a less favorable initial equilibrium; whether the target is ultimately reached depends on the balance between these two effects.\\

\begin{figure*}[h] \centering \includegraphics[width=0.9\textwidth]{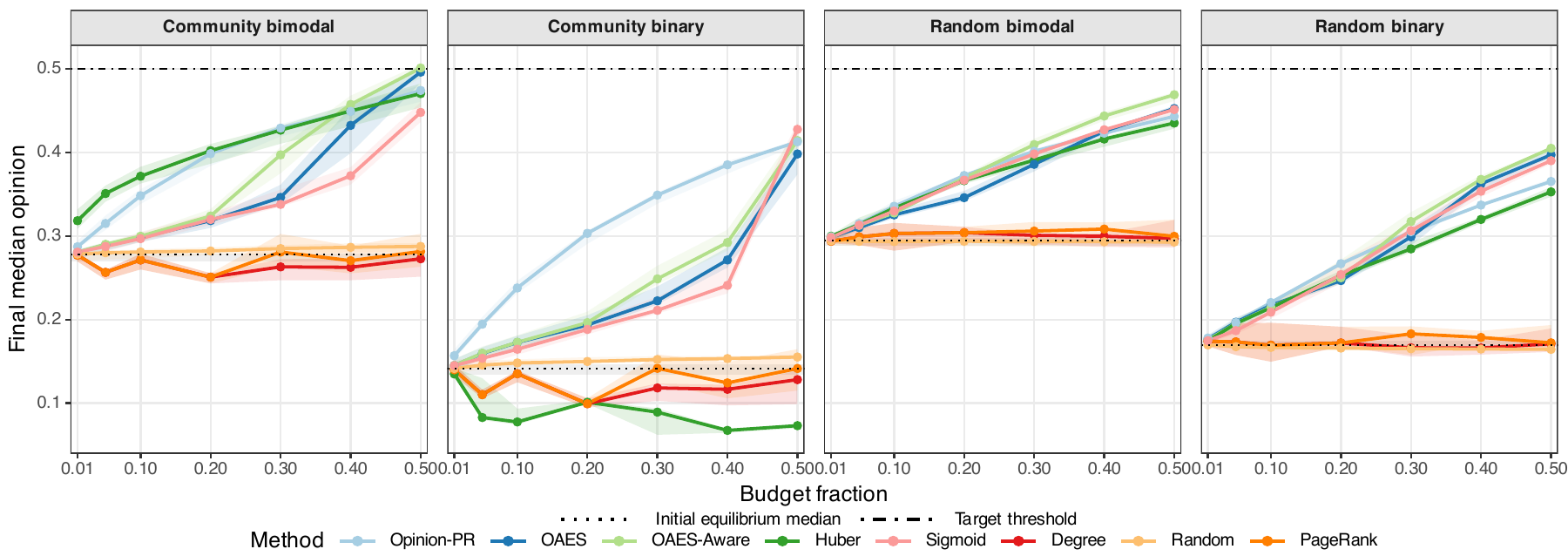} \vspace{-8pt} 
\caption{Final equilibrium median on SBM with homogeneous susceptibility and $20\%$ favorable nodes. Lines and shading show the median and interquartile range over ten opinion realizations; Random is averaged over 100 runs per realization. Dot-dashed lines mark the target; dotted lines and grey shading show the initial equilibrium. Budget is $k_{\mathrm{edge}}/|E|$.}
\label{fig:synthetic_sbm} 
\vspace{-10pt}
\end{figure*}

For lack of space, the complete results under heterogeneous susceptibility are reported in the SI. They show that the feasibility of structural intervention depends on how opinion placement and susceptibility jointly align with network topology. 

\subsection{RQ2: How effective are structural interventions on real-world networks with empirically derived opinion distributions?}
\label{sec:rq2_real}

We now evaluate structural interventions on real-world networks with empirically derived opinion distributions, under both homogeneous and heterogeneous susceptibility. 

\paragraph{Observed opinions under homogeneous susceptibility.}
Figure~\ref{fig:real_shift_fraction} reports the median shift toward the target as a function of the modified fraction under homogeneous susceptibility. Here, we focus on edge interventions, represented by solid curves; node-level interventions are examined in the subsequent comparison. Structural interventions produce non-negligible shifts under observed opinion distributions, although their effectiveness varies substantially across datasets. Reddit-Chitra starts close to the target, and Opinion-PR, OAES, OAES-Aware, Huber, and Sigmoid flip the median with an edge budget of $0.05|E|$. In Twitter-Chitra, the target is farther from the initial equilibrium, but several methods eventually reach it as the budget increases. Opinion-PR crosses the target at a modified fraction of $0.3$, followed by Huber and OAES-Aware at $0.4$, and OAES at $0.5$. Polblogs is more resistant to intervention, with most methods producing only limited shifts; nevertheless, Huber crosses the target at a modified fraction of $0.2$ on the same grid.  Additional targeted evaluations at finer budget resolution confirm target crossing with Huber at approximately $11\%$
edge additions on Polblogs and with Opinion-PR at $27\%$ on Twitter-Chitra. These percentages are relative to the
original edge count; the figure reports the common budget grid used for the main comparison. Thus, structural intervention remains feasible under observed opinions, but the required budget and the most effective method are dataset-dependent.

\paragraph{Interaction with heterogeneous susceptibility.} Due to space constraints, we report one illustrative heterogeneous-susceptibility case in the main paper and provide the complete results in the SI. We focus on Twitter-Chitra because it exhibits the clearest within-dataset contrast between the two centrality-based policies while keeping the network and observed opinions fixed. As shown in Figure~\ref{fig:twitter_susceptibility}, under centrality-aligned susceptibility, structural interventions produce only limited median shifts and none reaches the target. Favorable nodes make up $25.5\%$ of the network and are more central, meaning higher susceptibility assigned to them weakens the influence of these sources. Conversely, under centrality-complemented susceptibility, this relationship reverses, allowing several methods to exceed the threshold with smaller budgets.  Polblogs and Reddit-Chitra exhibit more gradual or method-dependent differences, but confirm the broader conclusion that susceptibility can alter both the initial equilibrium and the effectiveness of added edges. Complete curves for all datasets and susceptibility policies are reported in the SI.

\begin{figure*}[t]
\centering
\includegraphics[width=0.8\textwidth]{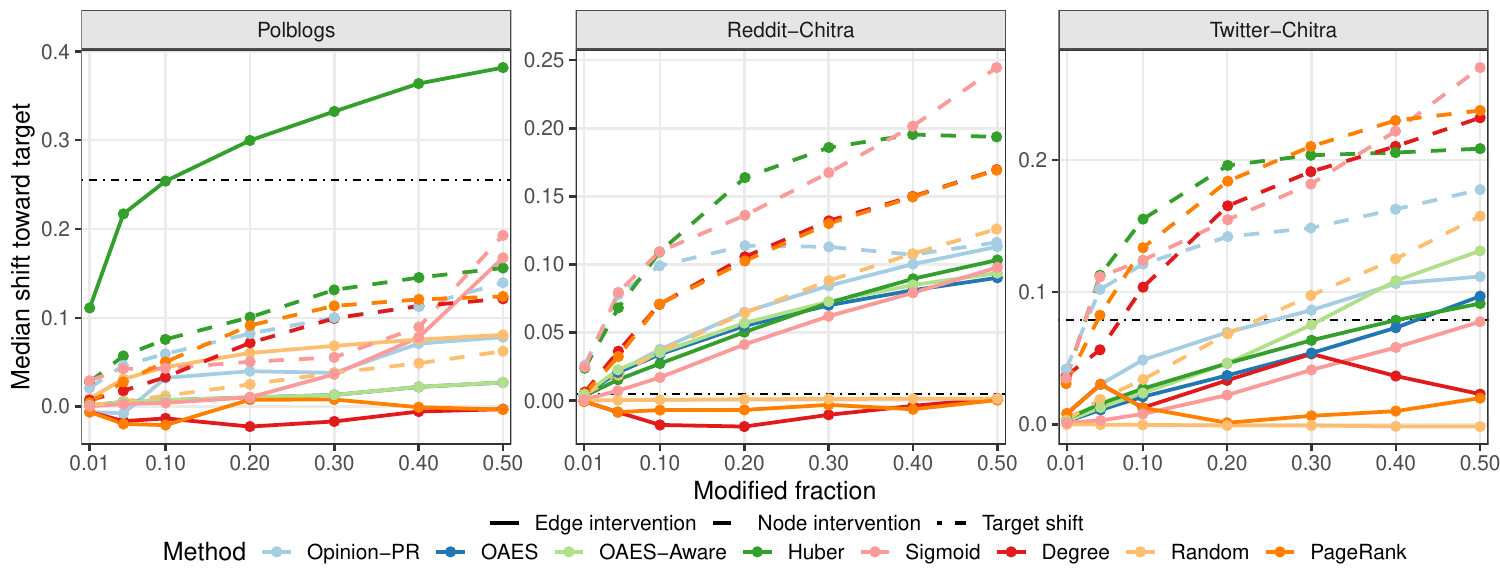}
\vspace{-5 pt}
\caption{Median shift toward the target on real-world datasets under homogeneous susceptibility. Modified fractions are \(k_{\mathrm{edge}}/|E|\) for edge interventions and \(k_{\mathrm{node}}/|V|\) for node interventions. Solid and dashed curves denote edge and node interventions, respectively, while the horizontal dot-dashed line indicates the shift required to reach the target.}
\label{fig:real_shift_fraction}
\vspace{-10pt}
\end{figure*}

\begin{figure}[h]
\centering
\includegraphics[width=\columnwidth]{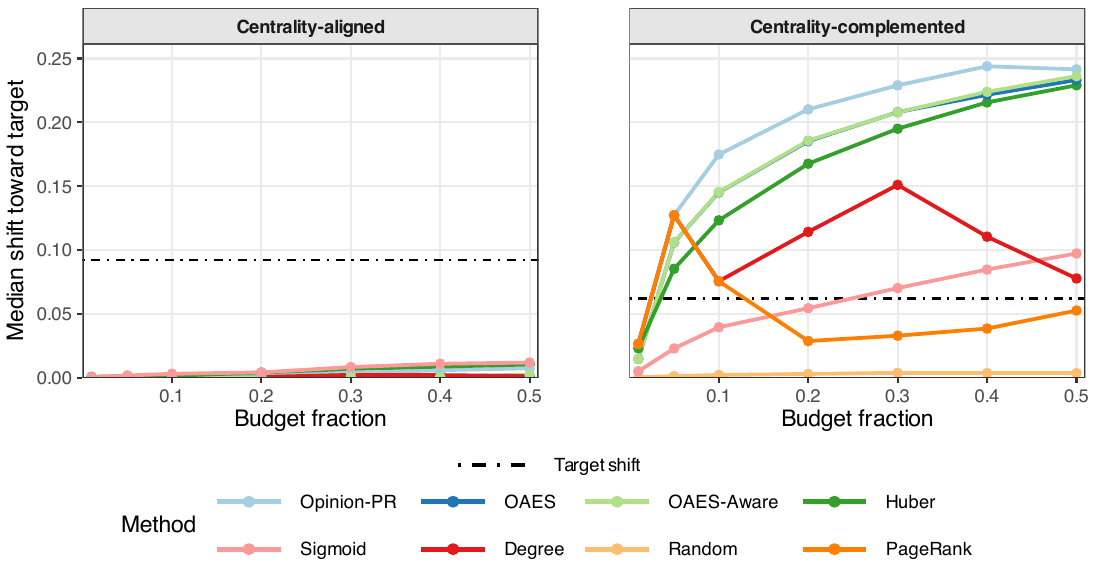}
\vspace{-15pt}
\caption{Median shift toward the target on Twitter-Chitra with centrality-aligned (left) and centrality-complemented susceptibility (right). Dot-dashed lines mark the required shift; shading shows the $95\%$ confidence interval for Random. Budget is $k_{\mathrm{edge}}/|E|$.}
\label{fig:twitter_susceptibility}
\vspace{-10pt}
\end{figure}

\subsection{RQ3: How do structural interventions compare with node-level interventions?}
\label{sec:rq3_node_edge}

We finally compare structural interventions with their node-level counterparts under homogeneous susceptibility, which provides a common baseline unaffected by the markedly different initial equilibria induced by heterogeneous susceptibility. Corresponding heterogeneous-susceptibility comparisons are reported in the SI. We first use normalized intervention fractions and then account for the different costs of modifying nodes and edges.

\subsubsection{RQ3a: Comparison under normalized budgets}\label{sec}

Figure~\ref{fig:real_shift_fraction} compares structural and node-level interventions under the same normalized modified-fraction scale. Node-level interventions often induce larger median shifts for a given modified fraction because they directly change selected users' susceptibility. Structural interventions act more indirectly by modifying exposure patterns through edge additions. Nevertheless, edge-based methods still achieve meaningful shifts and can outperform node-level interventions. In Reddit-Chitra and Twitter-Chitra, the best strategies from both intervention families cross the target with relatively small or moderate budgets. Polblogs exhibits a qualitatively different behavior: only structural interventions reach the target, while all node-level methods remain below it. This behavior is likely related to the highly assortative structure of Polblogs. Since opinions are strongly clustered within ideological communities, changing the susceptibility of a few users may have limited impact on cross-community influence, whereas adding edges can directly create new exposure paths between otherwise weakly connected regions of the network.


\subsubsection{RQ3b: Cost-calibrated comparison}
\label{sec:cost_calibrated}

Normalized fractions do not account for the different costs of manipulating users and adding edges. Table~\ref{tab:degree_weighted_cost} therefore reports the least expensive successful intervention under the degree-weighted cost model defined in the experimental setup. 
The results show that the relative advantage of node-level interventions depends strongly on the assumed cost model. When $\rho$ is small, manipulating a limited number of influential users can be less expensive than adding the edges required to reach the target. As the cost assigned to high-reach users increases, however, structural interventions become progressively more competitive. In Polblogs, as already observed in RQ3a, only structural interventions flip the median, making edge additions the only feasible strategy throughout the tested range. In Reddit-Chitra and Twitter-Chitra, node-level interventions are cheaper for $\rho=1$ and $\rho=2$. At $\rho=5$, edge additions become preferable in Twitter-Chitra, whereas node interventions remain cheaper in Reddit-Chitra. At $\rho=10$, edge additions are the least expensive strategy for all three datasets. The outcome remains unchanged for the larger values reported in the SI.

Thus, the apparent advantage of node manipulation under normalized budgets diminishes once targeting high-reach users is assigned a greater cost. Although the crossover is dataset-dependent, structural interventions become the most cost-effective option under moderate node-cost assumptions.

\begin{table}[h]
\centering
\setlength{\tabcolsep}{4pt}
\caption{Minimum intervention cost with degree-weighted node costs and homogeneous susceptibility. Each cell reports the least-cost intervention type and cost ($E$: edge addition; $N$: node-level intervention).}
\vspace{-8pt}
\label{tab:degree_weighted_cost}
\begin{tabular}{lcccc}
\toprule
Dataset & $\rho=1$ & $\rho=2$ & $\rho=5$ & $\rho=10$ \\
\midrule
Polblogs       & E (3837) & E (3837) & E (3837) & E (3837) \\
Reddit-Chitra  & N (160)  & N (320)  & N (800)  & E (897)  \\
Twitter-Chitra & N (735)  & N (1470) & E (2183) & E (2183) \\
\bottomrule
\end{tabular}
\vspace{-15pt}
\end{table}

\section{Conclusions}

We investigated whether the equilibrium median in the Friedkin-Johnsen model can be steered by modifying the interaction network rather than user-level attributes. We formulated structural median control as an edge-addition problem, proved its computational hardness, and proposed continuous optimization methods and computationally lighter opinion-aware heuristics.

Our experiments show that intervention success depends jointly on network topology, opinion placement, and susceptibility. Structural interventions are most effective when added edges create favorable influence pathways toward pivotal nodes. Community structure and heterogeneous susceptibility can substantially alter their effectiveness, while opinion-aware methods generally outperform purely topological baselines. Under homogeneous susceptibility, node-level interventions often appear stronger under normalized budgets, but structural interventions become competitive and frequently more cost-effective when targeting high-reach users is costly. These findings also provide an auditing perspective, identifying topologies and opinion distributions exposed to structural manipulation and the cost regimes in which it becomes attractive.

Future work should account for probabilistic edge adoption and optimize the expected equilibrium median. Overall, our results establish structural intervention as a promising alternative to direct node manipulation for steering collective outcomes under Friedkin--Johnsen dynamics.

\section{Acknowledgements}
Generative AI was used solely to assist with language editing and to improve the clarity and readability of the manuscript. It was not used to conduct the research, generate or analyze data, develop the methodology, or draw scientific conclusions. All AI-assisted edits were reviewed and approved by the authors.




\bibliography{references_compact}

\end{document}